\newtheorem{theorem}{Theorem}
\newtheorem{proposition}[theorem]{Proposition}
\newtheorem{corollary}[theorem]{Corollary}
\newtheorem{lemma}[theorem]{Lemma}
\theoremstyle{definition}
\theoremstyle{remark}
\newtheorem{remark}[theorem]{Remark}
\newcommand{\al}{\alpha}
\newcommand{\de}{\delta}
\newcommand{\ep}{\epsilon}
\newcommand{\la}{\lambda}
\newcommand{\vp}{\varphi}
\newcommand{\De}{\Delta}
\newcommand{\Si}{\Sigma}
\def\CC{\mathbb{C}}
\def\NN{\mathbb{N}}
\def\RR{\mathbb{R}}
\renewcommand\SS{\mathbb{S}}
\newcommand{\cV}{{\mathcal V}}
\newcommand{\pd}{\partial}
\newcommand\minus\backslash
\newcommand\lan\langle
\newcommand\ran\rangle
\DeclareMathOperator\Real{Re}
\renewcommand\leq\leqslant
\renewcommand\geq\geqslant
\newlength{\intwidth}
\DeclareMathOperator\Imag{Im}
\newcommand\tvp{\widetilde\vp}
\begin{document}

\title[Dislocations of arbitrary topology in Coulomb
eigenfunctions]{Dislocations of arbitrary topology\\ in Coulomb
  eigenfunctions}

\author{Alberto Enciso}
\address{Instituto de Ciencias Matem\'aticas, Consejo Superior de
  Investigaciones Cient\'\i ficas, 28049 Madrid, Spain}
\email{aenciso@icmat.es, david.hartley@icmat.es, dperalta@icmat.es}

\author{David Hartley}

\author{Daniel Peralta-Salas}

%
%
\begin{abstract}
For any finite link $L$ in $\RR^3$ we prove the existence of a complex-valued eigenfunction of the Coulomb Hamiltonian such
that its nodal set contains a union of connected components
diffeomorphic to $L$. This problem goes back to Berry, who
constructed such eigenfunctions in the case where $L$ is the trefoil
knot or the Hopf link and asked the question about the general result.
\end{abstract}

\maketitle

\section{Introduction}

Dislocations were introduced in quantum mechanics by Berry and Nye in
1974 by analogy with the classical wavefront dislocations studied in
optics and have found numerous applications in very diverse areas of
science including chemistry, water waves and the theory of liquid
crystals~\cite{Nye}. If one writes the quantum mechanical wavefunction
in terms of its amplitude and phase as $\psi=\rho\, e^{i\chi}$, we
recall that a {\em dislocation}\/ is a connected component of the zero
set $\{\rho=0\}$ such that the phase changes by a nonzero multiple of
$2\pi$ on a closed circuit around it.

Motivated by problems in the theory of dislocations, in \cite{Be01}
Berry constructs eigenfunctions of the Coulomb Hamiltonian in $\RR^3$ whose
nodal set contains a trefoil knot or a Hopf link as a union of
connected components. He then raises the question as to whether there
exist eigenfunctions of a quantum system whose nodal set has
components with higher order linking. The existence of knotted
structures, both from theoretical \cite{Annals,Acta} and applied
\cite{De10,Irvine,Irvine2,Science} viewpoints, has recently attracted considerable
attention, especially in optics and in fluid mechanics.

This question was answered in the affirmative by the authors in
\cite{JEMS} where the quantum system considered was the harmonic
oscillator. In this paper we return to the Coulomb potential, as the
original setting considered by Berry, and prove that any link is the
union of connected components of the nodal set for some
eigenfunction. As we will see later on, the singularity of the
potential and the fact that one can no longer employ bound states of
arbitrarily high energy introduce serious technical complications in
the problem that must be dealt with using new ideas.

Eigenfunctions of the Coulomb Hamiltonian are the functions $\psi$ in the
Sobolev space $H^1(\RR^3)$ that satisfy the equation
\begin{equation}\label{E.hyd}
	\left(\Delta+\frac{2}{|x|}+\lambda\right)\psi=0,
\end{equation}
in $\RR^3$. It is well known that the eigenvalues are given by
\[
\la_n:=-\frac1{n^2}\,,
\]
for $n\in\NN$ and that a basis of the eigenspace corresponding
to~$\la_n$ is given by
\begin{align*}
 \psi_{nlm}&:=f_{nl}(r)\,Y_{lm}(\theta,\phi)\,,\\
 f_{nl}&:= A_{nl}\,e^{-r/n}\,r^l\,L_{n-l-1}^{2l+1}\Big(\frac{2r}{n}\Big)\,,
\end{align*}
where $0\leq l\leq n-1$, and $-l\leq m\leq l$.
Here $A_{nl}$ is a normalization factor that we take to be
\[
A_{nl}=\frac{2^{l-1}(n-l-1)!}{(n+l)!},
\]
$L_k^{\al}$ are the associated Laguerre polynomials, and $Y_{lm}$ are
the spherical harmonics on $\SS^2$. The degeneracy of the eigenspace
of the $\la_n$ energy level is therefore seen to be $n^2$.

The main result of the paper is the following theorem, which shows
that, as conjectured by Berry, there are Coulomb eigenfunctions having dislocations of arbitrary topology:

\begin{theorem}\label{T.main}
  Let $L$ be any finite link in $\RR^3$. Then there is some positive constant
  $E=E(L)$ such that, for any Coulomb eigenvalue 
  with $\la_n>-E$, there exist a complex-valued
  eigenfunction~$\psi$ of energy~$\la_n$ and a diffeomorphism~$\Phi$
  of~$\RR^3$ such that $\Phi(L)$ is a union of connected components of
  the zero set $\psi^{-1}(0)$. 
\end{theorem}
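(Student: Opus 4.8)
The plan is to reduce the theorem to two largely independent ingredients: a \emph{structural stability} principle for nodal links of complex-valued functions, and a \emph{semiclassical global approximation} theorem for Coulomb eigenfunctions. The stability principle is robust and elementary; the approximation theorem is where the difficulty lies. To set up the semiclassical limit I would write $h:=1/n$ and rescale $x=n^2y$, so that equation~\eqref{E.hyd} with $\la=\la_n$ becomes the fixed-energy semiclassical Schr\"odinger equation $-h^2\Delta_y\psi=(2/|y|-1)\psi$, whose classically allowed region is the ball $\{|y|<2\}$. Fix a base point $y_0$ with $0<|y_0|<2$, so that near $y_0$ the coefficient $2/|y|-1$ is smooth and strictly positive and the Coulomb singularity at the origin is avoided. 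After a further dilation $y=y_0+hz$ about $y_0$, the equation converges, as $h\to0$ and after normalizing the local wavenumber, to the Helmholtz equation $\Delta w+w=0$ on a fixed ball $B$; this is the model equation of the problem.

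First I would record the model and its stability. By the realization of links by complex monochromatic waves established in the authors' earlier work, there is a complex-valued solution $w$ of $\Delta w+w=0$ whose zero set contains a union of components diffeomorphic to $L$, with $\nabla\Real w$ and $\nabla\Imag w$ linearly independent along this union. This transversality makes the nodal link \emph{structurally stable}: if $\widetilde w$ is any complex function that is $C^1$-close to $w$ on a neighborhood of the link, then the implicit function theorem produces for $\widetilde w$ a nodal link ambient isotopic to that of $w$, and hence a diffeomorphism $\Phi$ of $\RR^3$ carrying $L$ into $\widetilde w^{-1}(0)$. Thus the whole problem reduces to producing a Coulomb eigenfunction $\psi$ of energy $\la_n$ that, in the rescaled ball $B$ around $y_0$, is $C^1$-close to the model solution $w$.

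The heart of the matter is therefore the following approximation statement: for all $n$ large enough (equivalently $\la_n>-E$ with $E=E(L)$ small) and any $\de>0$, there is an eigenfunction $\psi=\sum_{l,m}c_{lm}\psi_{nlm}$ with $\|\psi-w\|_{C^1(B)}<\de$ in the rescaled coordinates. I would prove this in two stages. In the first stage I analyze the building blocks $\psi_{nlm}=f_{nl}(r)\,Y_{lm}$ in the semiclassical regime: the WKB asymptotics of the radial functions $f_{nl}$, via the known asymptotics of the Laguerre polynomials, exhibit them in the allowed region as oscillatory waves of wavelength of order $h$, while the spherical harmonics $Y_{lm}$ with $l,m$ in the appropriate range localize angularly; combining these shows that suitable finite linear combinations of the $\psi_{nlm}$ converge, as $h\to0$, to prescribed local solutions of the Helmholtz equation near $y_0$. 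In the second stage I pass from this family of elementary Helmholtz solutions to the specific model $w$ by a Runge-type density argument, justified through unique continuation and duality. The fact that each eigenspace is only $n^2$-dimensional is not an obstruction: since $n^2\to\infty$ while $w$ has bounded complexity (fixed wavelength and fixed support $B$), the available modes eventually resolve it.

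The main obstacle is exactly this approximation step, and it is where the new ideas flagged in the introduction are needed. Unlike the harmonic oscillator treated in~\cite{JEMS}, one cannot appeal here to bound states of arbitrarily high energy, so the usual Runge-type approximation by global solutions is unavailable at fixed bounded energy; the semiclassical flexibility must instead be extracted from the growing degeneracy of a \emph{single} level $\la_n$, while keeping the construction uniformly away from the $1/|x|$ singularity. Once the approximation is in hand, combining it with the structural stability of the model link yields an eigenfunction $\psi$ of energy $\la_n$ and a diffeomorphism $\Phi$ of $\RR^3$ such that $\Phi(L)\subset\psi^{-1}(0)$ is a union of connected components, which is the assertion of the theorem.
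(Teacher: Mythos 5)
Your reduction to (a) structural stability of a transverse nodal link and (b) a $C^1$ approximation of a model solution by genuine eigenfunctions is sound in outline, and part (a) is handled essentially as in the paper (transversality of $\nabla\Real$ and $\nabla\Imag$ along the link plus Thom's isotopy theorem). But your route to (b) is genuinely different from the paper's and, as written, has a real gap precisely at the step you yourself flag as the heart of the matter. You rescale $x=n^2y$, zoom in at a bulk point $y_0$ with $0<|y_0|<2$ at scale $h=1/n$, and take the Helmholtz equation as the model; you then assert that ``suitable finite linear combinations of the $\psi_{nlm}$ converge, as $h\to0$, to prescribed local solutions of the Helmholtz equation near $y_0$.'' That is exactly the statement requiring proof, and nothing in your outline supplies it. In your regime the relevant radius is $r\sim n^2|y_0|$, i.e.\ $2r/n\sim n$, which lies outside the small-argument (Bessel) regime of the Laguerre asymptotics; you would need Plancherel--Rotach/WKB asymptotics of $L^{2l+1}_{n-l-1}$ uniform in $l$ up to order $n$, since resolving a wavelength-sized link at distance $\sim n^2$ from the origin requires angular momenta $l\sim n$, right at the edge $l\leq n-1$ of the eigenspace. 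Individual spherical harmonics do not ``localize angularly,'' so the construction of local wave packets out of the $\psi_{nlm}$, and the duality/unique-continuation argument showing that their span is asymptotically dense in local Helmholtz solutions, are both left unproved; dimension counting ($n^2$ modes versus a model of bounded complexity) is not a substitute.

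The paper sidesteps all of this by working at a \emph{fixed} spatial scale near the origin rather than semiclassically in the bulk: since $\la_n\to0^-$, the radial eigenfunctions converge in $C^1$ on fixed annuli to $J_{2l+1}(\sqrt{8r})/\sqrt{8r}$ for each \emph{fixed} $l$ (only bounded-argument, fixed-$l$ Laguerre asymptotics are needed), so the model operator is the zero-energy Coulomb Hamiltonian $\De+2/|x|$, not Helmholtz. The genuinely new ingredient is then a Runge-type approximation theorem for this singular, non-sign-definite operator, which forces the link to sit in a ball of radius $R<\sqrt\pi/4$ where $-\De-2/|x|$ is positive definite and a Dirichlet Green's function exists; the local solution vanishing on the link is built by Cauchy--Kowalewski and then globalized by that theorem. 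Your proposal contains no analogue of this step. If you wish to pursue the semiclassical route you must actually prove the inverse-localization statement for the $n$-th Coulomb eigenspace at a bulk point, with uniformity in $l$; as it stands, the central analytic content of the theorem is asserted rather than established.
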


It should be emphasized that the deformed link $L':=\Phi(L)$ is a bona
fide dislocation set, meaning that there is a nonzero change of phase
(actually, of~$2\pi$) along a circuit around any component of~$L'$. This is an
immediate consequence of the fact that $L'$ satisfies the
nondegeneracy condition
\[
\text{rank}(\nabla \Real \psi(x),\nabla\Imag \psi(x))=2
\]
for all $x\in L'$, which means that $L'$ arises as the transverse intersection of the
zero sets of the real and imaginary parts of the eigenfunction~$\psi$. 
It is also worth mentioning that, as Berry conjectured, the link $L'$
is \textit{structurally stable}, that is, there exists an $\ep>0$
such that for any $C^1$ complex-valued function $\vp$ with
$\|\vp-\psi\|_{C^1}<\ep$, there is a diffeomorphism $\Phi_1$ of
$\RR^3$ close to the identity such that $\Phi_1(L')$ is a union of
connected components of the zero set $\vp^{-1}(0)$.

Heuristically, the idea of the proof of the theorem is the
following. As the energy levels $\la_n$ tend to 0 for large $n$, it is
clear that the formal limit of Equation~\eqref{E.hyd} as $n\to\infty$ is
\begin{equation}\label{E.hyd0}
	\left(\De+\frac2{|x|}\right)\vp=0\,.
\end{equation}
For this equation, one can prove that if the link~$L$ is contained in
a ball $B_{R_0}$ of a certain fixed radius (and this can always be ensured upon
deforming the link with a suitable diffeomorphism), then there is a
solution of~\eqref{E.hyd0} whose zero set contains a union of
connected components that is a small deformation of the
link~$L$. Furthermore, this union is a structurally stable set. The
point now is that one can prove that there is a sequence of
eigenfunctions of energy~$\la_n$ that approximate the above
function~$\vp$ in~$B_{R_0}$ as $n\to\infty$, so by structural stability we
infer that for all large enough~$n$ there are eigenfunctions with
energy~$\la_n$ and whose zero set contains a union of connected
components diffeomorphic to~$L$.

This approach is completely different from Berry's, which hinges in
controlling explicit perturbations of a fixed eigenfunction of the
Coulomb Hamiltonian whose zero set contains a degenerate circle (i.e.,
a circle on which the gradient of the eigenfunction is zero
identically). Berry's construction suggests that one probably needs
eigenfunctions with high~$n$ to realize complicated knots; e.g., for
the trefoil knot he needs at least $n=7$. Concerning the relationship
between Theorem~\ref{T.main} and our previous result for the harmonic
oscillator~\cite{JEMS}, and leaving technicalities aside, an
intriguing fact is that whereas for the harmonic oscillator one shows
the existence of arbitrary links only if they are contained in
arbitrarily small balls (specifically, of radius of order $n^{-1/2}$), in the
case of the Coulomb potential any link can be realized with a diameter
of order~1. In a way, this can be understood as evidence of the fact
that the key ingredient of the proof is only the degeneracy of the
eigenfunctions, rather than having arbitrarily large energies.



The structure of the paper is as follows. In Section \ref{S.largen} we
consider the behavior of the eigenfunctions $\psi_{nlm}$ for large
values of $n$ by computing the high-order asymptotics of the Laguerre
polynomials, and prove that they become $C^1$-close to a continuous
solution of (\ref{E.hyd0}). Section \ref{S.approx} details the
existence of continuous solutions to (\ref{E.hyd0})
in~$\RR^3$ that approximate functions satisfying (\ref{E.hyd0}) in a
compact subset of the ball $B_{R_0}$, for a certain fixed
radius~$R_0$. Finally, in Section \ref{S.main}, a function satisfying
(\ref{E.hyd0}) in $\RR^3$ with a nodal set containing a small deformation of the link~$L\subset B_{R_0}$ is constructed and the two previous approximation results are used to prove Theorem~\ref{T.main}.

\section{Large~$n$ asymptotics for Coulomb eigenfunctions}
\label{S.largen}

In this section we will present a large~$n$ asymptotic expansion
for the radial part of the eigenfunctions of the Coulomb Hamiltonian, which will be useful in the
proof of the main theorem. The result can be stated as follows, where
$J_\nu$ denotes the Bessel function of the first kind:

\begin{proposition}\label{P.asympt}
For any fixed~$l$ and~$R$, 
\[
\lim_{n\to\infty} \bigg\| f_{nl} - \frac{J_{2l+1}(\sqrt{8r})}{\sqrt{8r}}\bigg\|_{C^0((0,R))}=0\,.
\]
\end{proposition}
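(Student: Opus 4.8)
The plan is to reduce the claim to the classical Mehler--Heine asymptotics for Laguerre polynomials, which in the normalization relevant here reads
\[
\lim_{k\to\infty} k^{-\alpha}\,L_k^{\alpha}\Big(\frac{z}{k}\Big)=z^{-\alpha/2}\,J_{\alpha}(2\sqrt{z})=:h_\alpha(z),
\]
the convergence being uniform for $z$ in compact subsets of $\CC$. I would first record that although each factor on the right is singular at $z=0$, the product $h_\alpha$ extends to an entire function (its Taylor expansion at the origin begins with $1/\Gamma(\alpha+1)$), so $h_\alpha$ is in particular continuous, hence uniformly continuous, on $[0,2R]$. Setting $\alpha=2l+1$ (which is fixed) and $k=n-l-1$, the idea is to write
\[
f_{nl}(r)=\big[A_{nl}\,k^{2l+1}\big]\,e^{-r/n}\,r^{l}\,\Big[k^{-(2l+1)}L_{k}^{2l+1}\Big(\tfrac{2r}{n}\Big)\Big]
\]
and to pass to the limit in each factor separately, with uniform control on $(0,R)$.

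The main point is the bracketed Laguerre factor, where the difficulty is that the argument $2r/n$ is not exactly of the form $z/k$ with $z$ fixed. To handle this I would put $x_n(r):=\tfrac{2r\,(n-l-1)}{n}$, so that $2r/n=x_n(r)/k$ and $x_n(r)=2r\big(1-\tfrac{l+1}{n}\big)$ lies in $[0,2R]$ and converges to $2r$ uniformly on $(0,R)$. Splitting
\[
k^{-(2l+1)}L_k^{2l+1}\Big(\tfrac{2r}{n}\Big)-h_{2l+1}(2r)=\Big[k^{-(2l+1)}L_k^{2l+1}\big(\tfrac{x_n(r)}{k}\big)-h_{2l+1}(x_n(r))\Big]+\big[h_{2l+1}(x_n(r))-h_{2l+1}(2r)\big],
\]
the first bracket tends to $0$ uniformly because the Mehler--Heine convergence is uniform over the compact set $[0,2R]\ni x_n(r)$, while the second tends to $0$ uniformly because $h_{2l+1}$ is uniformly continuous on $[0,2R]$ and $x_n(r)\to 2r$ uniformly. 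Hence the bracket converges to $h_{2l+1}(2r)$ uniformly on $(0,R)$.

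It then remains to treat the prefactors, which is elementary. Using $(n+l)!/(n-l-1)!=\prod_{j=0}^{2l}(n-l+j)$ one sees that $A_{nl}\,k^{2l+1}=2^{l-1}(n-l-1)^{2l+1}/\prod_{j=0}^{2l}(n-l+j)\to 2^{l-1}$, since numerator and denominator are monic polynomials in $n$ of the same degree $2l+1$; moreover $e^{-r/n}\to1$ uniformly on $(0,R)$ and $r^l$ is bounded there. As a uniformly convergent product of uniformly bounded factors, $f_{nl}$ therefore converges uniformly on $(0,R)$ to
\[
2^{l-1}\,r^{l}\,(2r)^{-(2l+1)/2}\,J_{2l+1}(2\sqrt{2r})=\frac{J_{2l+1}(\sqrt{8r})}{\sqrt{8r}},
\]
where the last equality follows from $2\sqrt{2r}=\sqrt{8r}$ together with $2^{l-1}r^{l}(2r)^{-(2l+1)/2}=2^{-3/2}r^{-1/2}$, which is exactly the asserted limit. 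The step I expect to require the most care is the uniformity near the endpoint $r=0$: there the separate factors $r^l$ and $(2r)^{-(2l+1)/2}$ are singular, and it is only their product with the vanishing Bessel factor that stays bounded, so I would keep $h_{2l+1}(2r)$ grouped as a single continuous function throughout rather than splitting off its $z^{-\alpha/2}$ part.
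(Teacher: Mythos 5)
Your proof is correct, but it rests on a different classical input than the paper's. The paper invokes Szeg\H{o}'s uniform Hilb-type asymptotic formula (Theorem 8.22.4), in which the Bessel argument is $\sqrt{(4k+2\alpha+2)x}$; with $x=2r/n$, $k=n-l-1$, $\alpha=2l+1$ this collapses \emph{exactly} to $\sqrt{8r}$, so no rescaling of the argument is needed and the explicit error term immediately yields a quantitative rate ($O(n^{-2})$ after tracking the normalization $A_{nl}$ via Stirling). You instead use the Mehler--Heine limit theorem (Szeg\H{o}, Theorem 8.1.3), which forces you to absorb the mismatch between $2r/n$ and $z/k$ by introducing $x_n(r)=2r(1-\tfrac{l+1}{n})$ and adding a uniform-continuity step for the entire function $h_{2l+1}$; your handling of this, of the prefactor limit $A_{nl}k^{2l+1}\to 2^{l-1}$, and of the potential singularity at $r=0$ (keeping $z^{-\alpha/2}J_\alpha(2\sqrt z)$ grouped as one entire function) is all sound, and the final constants check out. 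What you lose relative to the paper is only the convergence rate, which is not needed for the proposition or for Corollary~\ref{C.asympt}, so your argument is a perfectly adequate, slightly more elementary substitute; what the paper's route buys is a one-line computation with an explicit $O(n^{-2})$ error.
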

\begin{proof}
The proposition follows from \cite[Theorem 8.22.4]{Szego75}, which
gives the asymptotic expansion for fixed~$\al>0$:
\begin{equation}\label{LagAsymp}
	 e^{-x/2}x^{\frac{\alpha}{2}}L_k^{\alpha}(x)=\frac{\Gamma(k+\alpha+1)}{(k+\frac{\alpha+1}{2})^{\frac{\alpha}{2}}k!}J_{\alpha}\left(\sqrt{(4k+2\alpha+2)x}\right)+x^{\frac{5}{4}}O\left(k^{\frac{2\alpha-3}{4}}\right),
\end{equation}
where the bound holds uniformly in $0\leq x\leq R$ for any $R>0$. Using the substitutions $x=2r/n$, $k=n-l-1$, and $\al=2l+1$ we are able to obtain the asymptotic expansion of $f_{nl}$:
\begin{align*}
f_{nl}(r)=&\frac{A_{nl}n^{l+\frac12}}{2^{l+\frac12}\sqrt{r}}e^{-r/n}\left(\frac{2r}{n}\right)^{\frac{2l+1}{2}}L_{n-l-1}^{2l+1}\left(\frac{2r}{n}\right)\\
=&\frac{A_{nl}n^{l+\frac12}}{2^{l-1}\sqrt{8r}}\left(n^{-l-\frac{1}{2}}\frac{(n+l)!}{(n-l-1)!}J_{2l+1}\left(\sqrt{8r}\right)+r^{\frac{5}{4}}n^{-\frac54}O\left(n^{\frac{4l-1}{4}}\right)\right)\\
=&\frac{J_{2l+1}(\sqrt{8r})}{\sqrt{8r}}+r^{\frac{3}{4}}O\left(n^{-2}\right),
\end{align*}
from which the proposition follows. To show that the error is of order
$n^{-2}$ we have used the definition of the factors $A_{nl}$ and
Stirling's asymptotic formula for the factorial, which yields:
\[
\frac{(n-l-1)!}{(n+l)!}=n^{-2l-1}+O(n^{-2l-2})\,.
\]
\end{proof}

Of course, since the eigenfunctions satisfy the radial equation
\[
\bigg(\pd_r^2+\frac2r\pd_r-\frac{l(l+1)}{r^2}+\frac2r+\la_{n}\bigg) f_{nl}(r)=0\,,
\]
this uniform estimate can be easily promoted to a $C^k$ bound. For our
purposes, it is enough to state the result as follows:

\begin{corollary}\label{C.asympt}
For any fixed~$l$ and any $0<R_1<R_2$, 
\[
\lim_{n\to\infty} \bigg\| f_{nl} - \frac{J_{2l+1}(\sqrt{8r})}{\sqrt{8r}}\bigg\|_{C^1((R_1,R_2))}=0\,.
\]
\end{corollary}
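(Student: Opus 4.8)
The plan is to leverage the second-order radial ODE, exactly as the surrounding text hints, to bootstrap the $C^0$ convergence of Proposition~\ref{P.asympt} into convergence of the first derivatives on the compact interval. Write $g(r):=J_{2l+1}(\sqrt{8r})/\sqrt{8r}$ for the limiting profile and set $w_n:=f_{nl}-g$, so that by Proposition~\ref{P.asympt} we have $\|w_n\|_{C^0((0,R))}\to0$ for every $R>0$. Introduce the zero-energy radial operator
\[
L:=\pd_r^2+\frac2r\pd_r-\frac{l(l+1)}{r^2}+\frac2r\,.
\]
The strategy is to show that $w_n$ solves an inhomogeneous linear ODE with small, well-controlled data on any interval bounded away from the origin, and then to convert $C^0$ smallness into $C^1$ smallness by an interior estimate.

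First I would record that $g$ is annihilated by $L$: the substitution $s=\sqrt{8r}$ transforms $Lg$ into a constant multiple of the Bessel operator $s^2\pd_s^2+s\pd_s+(s^2-\nu^2)$ applied to $J_\nu$ with $\nu=2l+1$, which vanishes by Bessel's equation (the index $\nu=2l+1$ is precisely what makes $\nu^2=4l(l+1)+1$ absorb the centrifugal term). Since $f_{nl}$ satisfies $Lf_{nl}=-\la_n f_{nl}$ by the radial equation displayed above, subtracting the two identities yields the key relation
\[
w_n''+\frac2r w_n'=\Big(\frac{l(l+1)}{r^2}-\frac2r\Big)w_n-\la_n f_{nl}\,.
\]
On a fixed interval $[a,b]$ with $0<a<R_1<R_2<b$ the coefficients $2/r$ and $l(l+1)/r^2-2/r$ are smooth and bounded, and the right-hand side is small: one has $\|f_{nl}\|_{C^0([a,b])}\le\|g\|_{C^0([a,b])}+\|w_n\|_{C^0([a,b])}$ uniformly bounded, so $\|\la_n f_{nl}\|_{C^0([a,b])}\to0$ because $\la_n\to0$.

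The crux, and the step I expect to be the main obstacle, is a standard interior derivative estimate for this ODE: there should be a constant $C$ depending only on $[a,b]$ and $l$ such that
\[
\|w_n'\|_{C^0((R_1,R_2))}\le C\big(\|w_n\|_{C^0([a,b])}+\|\la_n f_{nl}\|_{C^0([a,b])}\big)\,.
\]
To prove it, I would use the mean value theorem to locate a point $\xi\in[a,b]$ with $|w_n'(\xi)|\le 2\|w_n\|_{C^0([a,b])}/(b-a)$, integrate the ODE from $\xi$ to an arbitrary $r\in[a,b]$ to represent $w_n'(r)$, bound $|w_n''|$ pointwise by $|\la_n f_{nl}|$ plus bounded coefficients times $(|w_n'|+|w_n|)$, and close the resulting integral inequality for $|w_n'|$ with Gronwall's lemma. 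This produces a uniform bound on $|w_n'|$ over $[a,b]$ in terms of the right-hand side above. Combined with the $C^0$ convergence of $w_n$, both $\|w_n\|_{C^0((R_1,R_2))}$ and $\|w_n'\|_{C^0((R_1,R_2))}$ tend to zero, which is exactly the asserted $C^1((R_1,R_2))$ convergence. The only genuinely delicate point is the uniform boundedness of the coefficients, which degenerate like $r^{-1}$ and $r^{-2}$ as $r\to0$; this is precisely why the statement is restricted to intervals $(R_1,R_2)$ bounded away from the origin, and everything else is routine one-dimensional ODE theory.
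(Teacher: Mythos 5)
Your argument is correct and is essentially the paper's own proof: both derive a second-order ODE for the difference $w_n=f_{nl}-J_{2l+1}(\sqrt{8r})/\sqrt{8r}$ from the radial equation and the fact that the Bessel profile is annihilated by the zero-energy operator, and then use a standard interior estimate (which you spell out via the mean value theorem and Gronwall, and the paper leaves as ``standard'') to upgrade the $C^0$ convergence of Proposition~\ref{P.asympt} to $C^1$ on intervals bounded away from the origin, using $\la_n\to0$ to kill the source term.
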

\begin{proof}
Since the difference
\[
g_{nl}:= f_{nl} - \frac{J_{2l+1}(\sqrt{8r})}{\sqrt{8r}}
\]
satisfies the ODE
\[
\bigg(\pd_r^2+\frac2r\pd_r-\frac{l(l+1)}{r^2}+\frac2r+\la_{n}\bigg) g_{nl}(r)=\la_{n}\frac{J_{2l+1}(\sqrt{8r})}{\sqrt{8r}}\,,
\]
where the coefficients and the source term are uniformly bounded for
fixed~$l$ and all~$n$ on any finite interval $(R_1,R_2)$ with $R_1>0$,
it is standard that 
\begin{multline*}
\bigg\| f_{nl} -
\frac{J_{2l+1}(\sqrt{8r})}{\sqrt{8r}}\bigg\|_{C^1((R_1,R_2))}\leq C
\bigg\| f_{nl} -
\frac{J_{2l+1}(\sqrt{8r})}{\sqrt{8r}}\bigg\|_{C^0((R_1,R_2))} \\
+ C|\la_{n}|\bigg\|\frac{J_{2l+1}(\sqrt{8r})}{\sqrt{8r}}\bigg\|_{C^0((R_1,R_2))}\,,
\end{multline*}
where the constant depends on $R_1$ and $R_2$ but not on~$n$. Since
$\la_{n}\to0$ as $n\to\infty$, the result then follows from the $C^0$
bound given in Proposition \ref{P.asympt}.
\end{proof}

\section{An approximation theorem for the Coulomb problem with zero
  energy}
\label{S.approx}

In this section we shall prove an approximation theorem for the zero-energy
Coulomb Hamiltonian $\De+2/|x|$. In comparison with other Runge
theorems, a significant technical difference is that the operator is
not negative, which will force us to establish an approximation
theorem for sets contained in balls of radii not larger than
\[
R_0:=\frac{\sqrt\pi}4\,.
\]

For this, a first step is to show the existence of a suitable Green's
function for the Coulomb Hamiltonian:

\begin{lemma}\label{L.GF}
Let us consider the open ball $B_R\subset\RR^3$ centered at the origin and
of radius $R<R_0$. There exists a symmetric Dirichlet Green's function
on $B_R\times B_R$, bounded as
\[
|G_R(x,y)|\leq \frac{C}{|x-y|}
\]
and continuous outside
of the diagonal, which satisfies
\[
\bigg(\De_x + \frac2{|x|}\bigg) G_R(x,y)=-\de(x-y)\,,\qquad
G_R(\cdot,y)|_{\pd B_R}=0
\]
on $B_R\times B_R$.
\end{lemma}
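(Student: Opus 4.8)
The plan is to realize $G_R$ as the integral kernel of the inverse of the self-adjoint Dirichlet operator $H_R:=-(\De+2/|x|)$ on $B_R$, with form domain $H^1_0(B_R)$; the radius restriction $R<R_0$ is precisely what guarantees that this inverse exists and is well behaved. I would begin with the free Dirichlet Green's function $G_R^0$ of $-\De$ on $B_R$, given explicitly by the method of images, which is symmetric, nonnegative, and satisfies the sharp bound $0\le G_R^0(x,y)\le (4\pi|x-y|)^{-1}$ together with $\De_x G_R^0(\cdot,y)=-\de(\cdot-y)$ and $G_R^0(\cdot,y)|_{\pd B_R}=0$. Since $H_R$ is self-adjoint, everything reduces to controlling its bottom eigenvalue.

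The key step, and the main obstacle, is to prove that the Dirichlet quadratic form is coercive, i.e.\ that there is $c>0$ with
\[
\int_{B_R}\Big(|\grad u|^2-\frac2{|x|}\,|u|^2\Big)\,dx\ge c\,\|u\|_{H^1_0(B_R)}^2
\]
for all $u\in H^1_0(B_R)$. This is where the non-negativity of the Coulomb potential bites: since $+2/|x|$ is attractive and singular, positivity can only survive on sufficiently small balls. I would establish it by Cauchy--Schwarz, $\int_{B_R}2|x|^{-1}|u|^2\le 2(\int_{B_R}|x|^{-2}|u|^2)^{1/2}(\int_{B_R}|u|^2)^{1/2}$, combined with the sharp three-dimensional Hardy inequality $\int_{B_R}|x|^{-2}|u|^2\le 4\int_{B_R}|\grad u|^2$ and the Poincar\'e bound $\int_{B_R}|u|^2\le (R/\pi)^2\int_{B_R}|\grad u|^2$ coming from the first Dirichlet eigenvalue $\pi^2/R^2$ of $-\De$ on the ball. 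Multiplying these yields $\int_{B_R}2|x|^{-1}|u|^2\le (4R/\pi)\int_{B_R}|\grad u|^2$, so the form is coercive for all $R<\pi/4$, comfortably including the range $R<R_0=\sqrt\pi/4$ needed here. Coercivity shows $0\notin\spec(H_R)$, so $H_R^{-1}$ exists as a bounded positive operator, and its Schwartz kernel, symmetric by self-adjointness of $H_R$, is the candidate $G_R$.

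It then remains to extract the pointwise bound and off-diagonal continuity. The defining equation $(\De_x+2/|x|)G_R(\cdot,y)=-\de(\cdot-y)$ and the boundary condition $G_R(\cdot,y)|_{\pd B_R}=0$ are immediate from $H_R G_R(\cdot,y)=\de(\cdot-y)$ together with the form domain. For the quantitative control I would use the resolvent identity
\[
G_R(x,y)=G_R^0(x,y)+\int_{B_R}G_R^0(x,z)\,\frac2{|z|}\,G_R(z,y)\,dz,
\]
reading it as a fixed point: the contraction behind the coercivity estimate makes the associated Neumann series $G_R=\sum_{k\ge0}T^kG_R^0$ converge, where $T$ is the integral operator with kernel $G_R^0(x,z)\,2/|z|$.

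The final, more technical point is that the corrections $T^kG_R^0$ for $k\ge1$ are strictly less singular than the leading term. Using $G_R^0\le (4\pi|x-y|)^{-1}$, the first correction is dominated by the weakly singular integral $\int_{B_R}|x-z|^{-1}|z|^{-1}|z-y|^{-1}\,dz$, which is finite, locally bounded, and continuous in $(x,y)$ away from the origin by standard estimates for products of Riesz-type kernels in $\RR^3$, since the integrand is singular only at the integrable points $z\in\{0,x,y\}$. Consequently the full singularity of $G_R$ comes from $G_R^0$ alone, giving $|G_R(x,y)|\le C/|x-y|$, while the remaining terms are continuous, so $G_R$ is continuous off the diagonal; symmetry is manifest from that of $G_R^0$ and of the weight $2/|z|$ in the series. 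The crux is thus the coercivity estimate of the second paragraph, which both fixes the admissible radius and supplies the contraction driving the kernel construction.
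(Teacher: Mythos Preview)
Your coercivity argument is essentially the paper's: Cauchy--Schwarz, then Hardy, then Poincar\'e. (In fact you use the correct first Dirichlet eigenvalue $\pi^2/R^2$; the paper writes $\pi/R^2$, which is why it arrives at the tighter threshold $R<\sqrt\pi/4$ rather than your $R<\pi/4$.) Where you diverge is in the construction of the Green's function itself: the paper simply invokes a reference (Zhao, \emph{J.\ Math.\ Anal.\ Appl.}\ 116 (1986), Theorems~2 and~8), which, once positivity of the Dirichlet realization is known, furnishes the symmetric Green's function together with the two-sided bound $|G_R(x,y)|\leq C|x-y|^{-1}$ and off-diagonal continuity. Your approach is more self-contained, building $G_R$ from $G_R^0$ via the resolvent identity and a Neumann series.

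That alternative route is sound in spirit but has a gap as written. The coercivity estimate gives you that the Birman--Schwinger operator has $L^2$ spectral radius below $1$, hence the Neumann series $\sum_k T^kG_R^0$ converges \emph{in operator norm on $L^2$}. That does not by itself yield pointwise control of the kernel: you analyze only the first correction $TG_R^0$, show it is bounded and continuous, and then assert ``consequently the full singularity of $G_R$ comes from $G_R^0$ alone''. But you have not shown that $\sum_{k\ge1}T^kG_R^0$ converges pointwise to something bounded. To close this you need an extra ingredient, for instance that $T$ is a contraction on $L^\infty(B_R)$: since $u(x):=\int_{B_R}G_R^0(x,z)\,2|z|^{-1}\,dz$ solves $-\De u=2/|x|$ in $B_R$ with $u|_{\pd B_R}=0$, one computes $u(r)=R-r$, so $\|T\|_{L^\infty\to L^\infty}\le R<1$. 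With that (or an appeal to Khas'minskii-type lemmas for Kato-class potentials, which is essentially what the cited reference packages), the series converges uniformly and your conclusion follows.
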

\begin{proof}
The result will follow from the fact that the Dirichlet spectrum of ${-\De-\frac2{|x|}}$ is positive on $B_R$. To prove this we can use the min-max principle since its eigenfunctions are in $H_0^1$. We first note that for all $\vp\in C_0^{\infty}(B_R)$ we have
\begin{align*}
\int_{B_R}\frac{\vp^2}{|x|}\leq& \left(\int_{B_R}\vp^2\right)^{\frac12}\left(\int_{B_R}\frac{\vp^2}{|x|^2}\right)^{\frac12}\\
\leq& \left(\int_{B_R}\vp^2\right)^{\frac12}\left(4\int_{B_R}|\nabla\vp|^2\right)^{\frac12}\\
\leq& \frac{2}{\sqrt{\la_1(B_R)}}\int_{B_R}|\nabla\vp|^2,
\end{align*}
where 
\[
\la_1(B_R):=\frac\pi {R^2}
\]
is the first Dirichlet eigenvalue of the Laplacian in the ball of
radius~$R$ and we have used Hardy's inequality followed by
Poincar\'e's inequality. Therefore the first Dirichlet eigenvalue of
$-\De-\frac2{|x|}$ on $B_R$ satisfies:
\begin{align*}
\la&=\min_{\vp\in H_0^1(B_R)\backslash\{0\}}\frac{\int_{B_R}\Big(|\nabla\vp|^2-\frac2{|x|}\vp^2\Big)}{\int_{B_R}\vp^2}\\
&\geq\left(1-\frac{4R}{\sqrt{\pi}}\right)\min_{\vp\in H_0^1(B_R)\backslash\{0\}}\frac{\int_{B_R}|\nabla\vp|^2}{\int_{B_R}\vp^2}\\
&>0.
\end{align*}
Reference~\cite[Theorems 2 and 8]{Zhao} then ensures the existence of
a Green's function $G_R$ as above.
\end{proof}

We are now ready to prove the approximation theorem for sets contained
in the ball $B_{R_0}$. Notice that the series~$\tvp$ defined in the statement below satisfies
the equation
\[
\bigg(\De+\frac2{|x|}\bigg)\, \tvp=0
\]
in the whole space $\RR^3$, is continuous at the origin and falls off
at infinity as $|x|^{-3/4}$.

\begin{theorem}\label{T.approx}
Let us consider a radius $R<R_0$ and fix an integer~$k$ and a positive
real $\ep$. Suppose that $K$ is a compact subset of the ball $B_R$
which does not contain the origin and whose complement $B_R\backslash
K$ is connected. If a function~$\vp$ satisfies the equation
\[
\bigg(\De+\frac2{|x|}\bigg)\, \vp=0
\]
in a neighborhood~$U$ of~$K$, then for any large enough~$N$ there is a finite series of the form
\[
\tvp:=\sum_{l=0}^N\sum_{m=-l}^l c_{lm}\,
\frac{J_{2l+1}(\sqrt{8r})}{\sqrt{8r}}\, Y_{lm}(\theta,\phi),
\]
where $c_{lm}$ are constants,
which approximates~$\vp$ in the sense that
\[
\|\vp-\tvp\|_{C^k(K)}<\ep\,.
\]
\end{theorem}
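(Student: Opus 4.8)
The plan is to derive the theorem from a duality argument in the spirit of the Lax--Malgrange approach to Runge-type approximation. Write
\[
v_{lm}(x):=\frac{J_{2l+1}(\sqrt{8r})}{\sqrt{8r}}\,Y_{lm}(\theta,\phi),
\]
so that each $v_{lm}$ solves $(\De+2/|x|)v_{lm}=0$ in all of $\RR^3$ and is regular at the origin. Since a truncation at level $N$ can always reproduce a combination using fewer terms by setting the extra coefficients to zero, the assertion ``for all large $N$'' is equivalent to the statement that $\vp$ belongs to the $C^k(K)$-closure of the linear span $V$ of $\{v_{lm}\}$. By Hahn--Banach it therefore suffices to prove that every continuous linear functional on $C^k(K)$ annihilating $V$ also annihilates $\vp$. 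Such a functional is represented by a distribution $\mu$ of order at most $k$ with $\supp\mu\subseteq K$, so the task is: \emph{if $\langle\mu,v_{lm}\rangle=0$ for all $l,m$, then $\langle\mu,\vp\rangle=0$.}

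The first step is to form the Coulomb potential of $\mu$ inside the ball using the Green's function of Lemma~\ref{L.GF}: set
\[
u(y):=\langle\mu_x,G_R(x,y)\rangle .
\]
Since $G_R$ is smooth off the diagonal and away from the origin, and $\mu$ has finite order with support a positive distance from both $\pd B_R$ and the origin, $u$ is smooth on $B_R\setminus\supp\mu$, solves $(\De+2/|y|)u=-\mu$ on $B_R$ by symmetry of $G_R$, and vanishes on $\pd B_R$ by the Dirichlet condition; in particular $u$ is smooth up to $\pd B_R$. The key computation is then Green's second identity on $B_R$ applied to $u$ and $v_{lm}$ (legitimate after mollifying $\mu$ by smooth densities supported in a slightly larger compact subset of $B_R$): as $v_{lm}$ solves the homogeneous equation on $\overline{B_R}$, the zeroth-order terms cancel and one is left with
\[
\langle\mu,v_{lm}\rangle=-\int_{\pd B_R}v_{lm}\,\pd_n u\,dS .
\]
Hence the annihilation hypothesis is equivalent to saying that the Neumann trace $\pd_n u|_{\pd B_R}$ is $L^2(\pd B_R)$-orthogonal to every $v_{lm}|_{\pd B_R}$.

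The crux, and the only place the smallness $R<R_0$ enters, is the following. On $\pd B_R$ one has $v_{lm}|_{\pd B_R}=\frac{J_{2l+1}(\sqrt{8R})}{\sqrt{8R}}\,Y_{lm}$, so the orthogonality reads
\[
\frac{J_{2l+1}(\sqrt{8R})}{\sqrt{8R}}\int_{\pd B_R}Y_{lm}\,\pd_n u\,dS=0 .
\]
Because $\sqrt{8R}<\sqrt{8R_0}=\sqrt{2}\,\pi^{1/4}\approx1.88$ lies below $j_{1,1}\approx3.83$, the first positive zero of $J_1$ and hence below the first zero of every $J_{2l+1}$ with $l\geq0$, the Bessel prefactor never vanishes; since $\{Y_{lm}\}$ is complete in $L^2(\SS^2)$ we conclude $\pd_n u|_{\pd B_R}=0$. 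Thus $u$ has vanishing Cauchy data on $\pd B_R$. Extending $u$ by zero across $\pd B_R$ yields a solution of $(\De+2/|x|)u=0$ in an open neighborhood of $\pd B_R$ that vanishes on a nonempty open set, so unique continuation (the potential $2/|x|$ being locally in $L^p$ for $p<3$), propagated through the \emph{connected} open set $B_R\setminus K$, forces $u\equiv0$ on all of $B_R\setminus K$.

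It remains to recover $\langle\mu,\vp\rangle=0$. Choose a smooth domain $\Omega$ with $K\subset\Omega$ and $\overline\Omega\subset U\cap B_R$, so that $\pd\Omega\subset B_R\setminus K$. A second application of Green's identity on $\Omega$, now to $u$ and $\vp$, uses $(\De+2/|x|)\vp=0$ on $\Omega$ and $(\De+2/|x|)u=-\mu$ with $\supp\mu\subset\Omega$; the potential terms cancel and the boundary integral vanishes because both $u$ and $\pd_n u$ are zero on $\pd\Omega$ (there $u$ is identically zero on the open set $B_R\setminus K$). This leaves $\langle\mu,\vp\rangle=0$, completing the duality argument. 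The main obstacle is precisely the middle step: turning the abstract vanishing of the pairings $\langle\mu,v_{lm}\rangle$ into the concrete vanishing of the Neumann trace of $u$, which hinges on the non-vanishing of the Bessel factors $J_{2l+1}(\sqrt{8R})$ and thereby dictates the restriction to balls of radius below $R_0$ --- in sharp contrast with Runge theorems for negative operators, where no such restriction is needed. The remaining care is bookkeeping: justifying the two Green's identities for a distributional $\mu$, which I would handle by routine mollification.
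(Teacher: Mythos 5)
Your proof is correct (granting the mollification details you defer), but it takes a genuinely different route from the paper's. The paper does not dualize against the span of the functions $v_{lm}$ directly: it cuts off $\vp$, represents $\chi\vp$ through the Green's function of Lemma~\ref{L.GF}, approximates that integral by a Riemann sum of point sources $G_R(\cdot,x_j)$ with poles in $U\setminus K$, then uses Hahn--Banach in $C^0(K)$ (against measures, via Riesz--Markov) plus unique continuation to sweep the poles out to the shell $B_R\setminus\overline{B_{R_1}}$, and finally expands the resulting solution---now regular throughout $B_{R_1}$---in the basis $\frac{J_{2l+1}(\sqrt{8r})}{\sqrt{8r}}\,Y_{lm}$, truncating via $L^2(B_{R_1})$ convergence and promoting to $C^k$ with interior elliptic estimates. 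Your single duality step is shorter and makes explicit a mechanism the paper never needs: reducing $\langle\mu,v_{lm}\rangle$ by Green's identity to a boundary integral on $\pd B_R$, observing that the radial prefactor $J_{2l+1}(\sqrt{8R})$ is nonzero because $\sqrt{8R}<\sqrt{8R_0}\approx 1.88<j_{1,1}\approx 3.83\leq j_{2l+1,1}$, and using completeness of $\{Y_{lm}\}$ to conclude that $u=\langle\mu_x,G_R(x,\cdot)\rangle$ has vanishing Cauchy data on $\pd B_R$, hence vanishes on $B_R\setminus K$ by unique continuation. What the paper's longer detour buys is robustness: it requires no information about the boundary traces of the basis functions (density of the truncated sums comes for free from the expansion of a function regular at the origin), and by working with $C^0$ duality and measures it avoids order-$k$ distributions altogether, recovering the $C^k$ bound at the very end from the fact that $\vp-\tvp$ solves the equation near $K$---a simplification you could adopt verbatim. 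Two small inaccuracies in your commentary, neither fatal: the binding constraint behind $R<R_0$ is the positivity of $-\De-2/|x|$ on $B_R$ needed for Lemma~\ref{L.GF} (the Bessel non-vanishing alone would tolerate any $R<j_{1,1}^2/8\approx 1.83$), so it is not true that the Bessel factors are ``the only place the smallness enters''; and the unique continuation should be propagated through $B_R\setminus(K\cup\{0\})$ (still connected, and where the coefficients are analytic), or else handled by an $L^{3/2}_{\mathrm{loc}}$ unique continuation theorem as you indicate, since the equation is singular at the origin and the solutions $v_{lm}$ and $u$ are merely Lipschitz there.
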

\begin{proof}
Let us take a smooth function $\chi:\RR^3\rightarrow\RR$ such that
$\chi=1$ in a neighborhood of $K$ and $\chi=0$ outside the
neighborhood $U$ of $K$, whose closure can be assumed to lie in
$B_R\backslash\{0\}$, and define a smooth extension of $\vp$ to
$\RR^3$ by setting $\vp_1:=\chi \vp$. Since $\vp_1|_{\partial B_R}=0$,
we can use the Green's function in Lemma \ref{L.GF} to represent
$\vp_1$ as the integral
\begin{equation}\label{vp1G}
\vp_1(x)=\int_{B_R}G_R(x,y)\rho(y)\,dy,
\end{equation}
where $\rho(x)=-\De \vp_1(x)-\frac2{|x|} \vp_1(x)$ is supported in $U\backslash K$. In order to see
this, notice that the difference
\[
v:=\vp_1(x)-\int_{B_R}G_R(x,y)\rho(y)\,dy
\]
is zero on $\pd B_R$ and satisfies the equation
\[
\bigg(\De+\frac2{|x|}\bigg)\, v=0\,.
\]
Since we saw in the proof of Lemma~\ref{L.GF} that the operator $\De+2/|x|$ in $B_R$ with Dirichlet
boundary conditions and $R<\sqrt\pi/4$ is negative definite, we infer
that $v=0$, thereby proving~\eqref{vp1G}.

Since the support of $\rho$ is contained in $U\backslash K$, a
continuity argument ensures we can approximate the integral uniformly
in $K$ by a finite Riemann sum, i.e. for any $\de>0$ there is a large
enough integer $J$, points $x_j\in U\backslash K$ and constants $a_j$ such that
\[
\vp_2(x)=\sum_{j=1}^Ja_j\, G_R(x,x_j),
\]
satisfies
\begin{equation}\label{E.vpapprox1}
\|\vp_2-\vp\|_{C^0(K)}=\|\vp_2-\vp_1\|_{C^0(K)}<\de.
\end{equation}

Next we choose $0<R_1<R$ such that the closure of $U$ is contained in $B_{R_1}$ and consider the infinite dimensional space
\[
\cV=\text{span}\{G_R(z,x):z\in B_R\backslash \overline{B_{R_1}}\}\subset C^0(K).
\]
By the Riesz--Markov theorem the dual space to $C^0(K)$, which we will
denote by $C^0(K)^*$, is the space of finite signed Borel measures on $B_R$ whose support is contained in the set $K$. We let $\mu\in C^0(K)^*$ be any measure such that $\cV$ is in its null space, i.e. $\int_{B_R}f(x)\,d\mu(x)=0$ for all $f\in \cV$, and define
\[
F(x):=\int_{B_R}G_R(x,y)\,d\mu(y).
\]
Notice that $F$ is well defined because the Green's function is
bounded as $|G_R(x,y)|\leq C/|x-y|$. In particular, $F$ is continuous in $B_R\backslash K$.

By the definition of $\mu$, $F(x)=0$ for all $x\in B_R\backslash \overline{B_{R_1}}$ and, by the properties of the Green's function,
\[
\De F+\frac2{|x|}F=-\mu.
\]
In particular, $\De F(x)+\frac2{|x|}F(x)=0$ distributionally for all
$x\in B_R\backslash K$. Hence, $F$ is analytic in the region
$B_R\backslash(K\cup\{0\})$, which is connected and contains an open
set, $B_R\backslash \overline{B_{R_1}}$, on which $F$ is identically
zero. Therefore $F(x)=0$ for all $x\in B_R\backslash K$ and it follows that $\vp_2$ is also in the null space of $\mu$:
\begin{align*}
	\int_{B_R}\vp_2(x)\,d\mu(x)&=\sum_{j=1}^Ja_j\int_{B_R}G_R(x_j,x)\,d\mu(x)\\
	&=\sum_{j=1}^Ja_j\,F(x_j)\\
	&=0.
\end{align*}
Notice that in the last line one gets zero because each $x_j$ lies in
$U\backslash K\subset B_R\backslash K$. We have also used that the
Green's function is symmetric.

Therefore $\vp_2$ cannot be separated from the subspace $\cV$. Hence by the Hahn--Banach theorem $\vp_2$ can be uniformly approximated in $K$ by functions in $\cV$, i.e. for any $\de>0$ there exists $\vp_3\in \cV$ such that
\begin{equation}\label{E.vpapprox2}
\|\vp_3-\vp_2\|_{C^0(K)}<\de.
\end{equation}
Now, by construction, $\vp_3$ satisfies
\[
\De \vp_3+\frac2{|x|}\vp_3=0,
\]
inside $B_{R_1}$. By the properties of $G_R(x,y)$ in Lemma \ref{L.GF}
$\vp_3$ is continuous at the origin, so it can be expanded as
\[
\vp_3(x)=\sum_{l=0}^{\infty}\sum_{m=-l}^lc_{lm}\frac{J_{2l+1}(\sqrt{8r})}{\sqrt{8r}}\, Y_{lm}(\theta,\phi),
\]
where $c_{lm}$ are constants. 

Since the series converges in $L^2(B_{R_1})$, for any $\de>0$ there is an integer $N$ such that the finite sum
\[
\tvp(x)=\sum_{l=0}^{N}\sum_{m=-l}^l c_{lm}\frac{J_{2l+1}(\sqrt{8r})}{\sqrt{8r}}\, Y_{lm}(\theta,\phi),
\]
satisfies
\[
\|\tvp-\vp_3\|_{L^2(B_{R_1})}<\de.
\]
As
\[
\De(\tvp-\vp_3)+\frac2{|x|}(\tvp-\vp_3)=0,
\]
in $B_{R_1}$, the fact that the closure of $U$ is contained in
$B_{R_1}\backslash\{0\}$ allows us to use standard elliptic estimates to promote
the above $L^2$~bound to the uniform estimate
\[
\|\tvp-\vp_3\|_{C^0(U)}<C\de.
\]
From this inequality and the bounds in (\ref{E.vpapprox1}) and (\ref{E.vpapprox2}) we obtain
\[
\|\tvp-\vp\|_{C^0(K)}<(C+2)\de.
\]
Finally, since
\[
\De(\tvp-\vp)+\frac2{|x|}(\tvp-\vp)=0,
\]
in a neighborhood of $K$, standard elliptic estimates then allow us to promote the $C^0$ bound to a $C^k$ bound, i.e.
\[
\|\tvp-\vp\|_{C^k(K)}<{C'}\de.
\]
The theorem then follows by choosing $\de$ small enough so that $C'\de<\ep$.
\end{proof}

\section{Coulomb eigenfunctions with dislocations of arbitrary topology}
\label{S.main}


This section will be devoted to the proof of Theorem~\ref{T.main}. For
this, we start by taking a diffeomorphism of $\RR^3$, $\Phi_0$, such
that the transformed link $L_0:=\Phi_0(L)$ is contained in a punctured
ball $B_R\backslash\{0\}$ for some $R<R_0$. Whitney's approximation theorem
guarantees that, by perturbing it if necessary, we can assume that $L_0$ is a real analytic submanifold of $\RR^3$.

Let $L_{0,a}$, $a=1,\ldots,M$, denote the connected components of
$L_0$ (which are just closed curves in $\RR^3$) and construct surfaces
$\Si_a^1\subset B_R\backslash\{0\}$ and $\Si_a^2\subset
B_R\backslash\{0\}$ that intersect transversally at $L_{0,a}$. This
can be achieved by considering an analytic submersion
$\Theta_a:W_a\rightarrow\RR^2$, where $W_a$ is a small tubular
neighborhood of $L_{0,a}$ and $\Theta_a^{-1}(0)=L_{0,a}$. We recall
that the map $\Theta_a$ is a {\em submersion}\/ if the rank of the
differential $(D\Theta_a)_x$ is~2 for all $x\in W_a$. The
existence of such a submersion is guaranteed since any closed curve in
$\RR^3$ has trivial normal bundle \cite{Ma59}. The surfaces can then
be taken as $\Si_a^1:=\Theta_a^{-1}((-1,1)\times\{0\})$ and
$\Si_a^2:=\Theta_a^{-1}(\{0\}\times(-1,1))$, so they are small
cylinders embedded in $W_a$.

With $b=1,2$, we now consider the Cauchy problems
\[
\De \vp_a^b+\frac{2}{|x|}\vp_a^b=0,\ \ \ \vp_a^b|_{\Si_a^b}=0,\ \ \ \partial_{\nu}\vp_a^b|_{\Si_a^b}=1,
\]
where $\partial_{\nu}$ denotes a normal derivative at the relevant
surface. Because the equations are analytic inside
$B_R\backslash\{0\}$, we can use the Cauchy--Kowalewski theorem to
obtain solutions in the closure of small neighborhoods of each
surface, $U_a^b\supset\Si_a^b$. By shrinking the neighborhoods if
necessary we can assume that $U_a^b\subset B_R\backslash\{0\}$ and the
tubular neighborhoods $U_a^1\cap U_a^2$ of $L_{0,a}$ are disjoint. 

We now take the union of these tubular neighborhoods,
\[
U:=\bigcup_{a=1}^{M}U_a^1\cap U_a^2,
\]
and define a complex valued function $\vp$ on the set $U$ as
\[
\vp|_{U_a^1\cap U_a^2}:=\vp_a^1+i\vp_a^2.
\]
By construction $\vp$ has the following properties:
\begin{enumerate}
\item It satisfies the equation $\De \vp+\frac{2}{|x|}\vp=0$ in the tubular neighborhood $U$ of the link $L_0$,
\item By taking $U$ small enough its nodal set is precisely $L_0$, i.e. $\vp^{-1}(0)=L_0$, and 
\item The intersection of the zero sets of the real and imaginary
  parts of~$\vp$ on $L_0$ is transverse, i.e. $\text{rank}(\nabla
  \Real \vp(x),\nabla\Imag \vp(x))=2$ for all $x\in L_0$. This is an
  immediate consequence of the Cauchy data used to construct $\vp_a^b$
  (namely, $\vp_a^b|_{\Si_a^b}=0$ and $\pd_\nu\vp_a^b|_{\Si_a^b}=1$)
  and of the fact that $\Si_a^1$ and $\Si_a^2$ intersect transversally.
\end{enumerate}

We denote by $K\subset U$ a compact set containing $L_0$ whose
complement $B_R\backslash K$ is connected. We can now use Theorem \ref{T.approx} to obtain constants $c_{lm}\in\CC$ and $N\in\NN$ such that the function
\[
\tvp=\sum_{l=0}^{N}\sum_{m=-l}^l c_{lm}\,\frac{J_{2l+1}(\sqrt{8r})}{\sqrt{8r}}\, Y_{lm}(\theta,\phi),
\]
satisfies
\begin{equation}\label{E.approx}
\|\tvp-\vp\|_{C^1(K)}<\ep,
\end{equation}
for an $\ep>0$ to be chosen later. With~$n$ a large number to be fixed
later, we now define
\[
\psi_n:=\sum_{l=0}^{N}\sum_{m=-l}^l c_{lm}\,f_{nl}(r)\, Y_{lm}(\theta,\phi),
\]
and note that
\begin{align*}
\|\psi_n-\tvp\|_{C^1(K)}&\leq \sum_{l=0}^{N}\sum_{m=-l}^l |c_{lm}|\left\|\left(f_{nl}-\frac{J_{2l+1}(\sqrt{8r})}{\sqrt{8r}}\right)Y_{lm}\right\|_{C^1(K)}\\
&\leq C\sum_{l=0}^{N}\left\|f_{nl}-\frac{J_{2l+1}(\sqrt{8r})}{\sqrt{8r}}\right\|_{C^1(K)}.
\end{align*}
By Corollary \ref{C.asympt} there exists $N'\in\NN$ such that for all $n\geq N'$ we have
\[
\|\psi_n-\tvp\|_{C^1(K)}<\ep,
\]
and combining this with Equation (\ref{E.approx}) we obtain
\begin{equation}\label{E.asympt}
\|\psi_n-\vp\|_{C^1(K)}<2\ep,
\end{equation}
for all $n\geq N'$

Item (iii) above and Thom's isotopy theorem~\cite[Theorem 20.2]{AR}
ensure that the link $L_0$ is structurally stable. We can therefore choose $\ep>0$ small enough so that for each $n\geq N'$ there exists a diffeomorphism, $\Phi_1$, of $\RR^3$ that is $C^1$-close to the identity and different from the identity just in a small neighborhood of $L_0$, such that $\Phi_1(L_0)$ is a union of connected components of the zero set $\psi_n^{-1}(0)$. Moreover, we can assume $\Phi_1(L_0)\subset B_R\backslash\{0\}$ and it is structurally stable since the $C^1(K)$-closeness of $\psi_n$ to $\vp$ implies that $\text{rank}(\nabla \Real \psi_n(x),\nabla\Imag \psi_n(x))=2$ for all $x\in \Phi_1(L_0)$. The theorem follows by setting $E:=-\la_{N'}$ and $\Phi:=\Phi_1\circ\Phi_0$.

\begin{remark}
It follows from the proof of Theorem~\ref{T.main} that if the link $L$ is contained in a punctured
ball $B_R\backslash\{0\}$ for some $R<R_0$, the diffeomorphism $\Phi$ can be chosen as close to the identity as one wishes in the $C^1$ norm.
\end{remark}

\section*{Acknowledgments}

The authors are supported by the ERC Starting Grants~633152 (A.E.) and~335079
(D.H.\ and D.P.-S.). This work is supported in part by the
ICMAT--Severo Ochoa grant
SEV-2015-0554.

\bibliographystyle{amsplain}

\end{document}